\newcommand{\tvar}{t}
\newcommand{\tdummy}{\tau}
\newcommand{\thor}{0} 
\newcommand{\R}{\mathbb{R}}
\newcommand{\ctrl}{u}
\newcommand{\dstb}{d}
\newcommand{\cfunc}{u(\cdot)}
\newcommand{\dfunc}{d(\cdot)}
\newcommand{\cset}{\mathcal{U}}
\newcommand{\cfset}{\mathbb{U}}
\newcommand{\dset}{\mathcal{D}}
\newcommand{\dfset}{\mathbb{D}}
\newcommand{\state}{x}
\newcommand{\traj}{\xi} 
\newcommand{\dyn}{f} 
\newcommand{\targetfunc}{l}
\newcommand{\targetset}{\mathcal{L}}
\newcommand{\costfunctional}{J}
\newcommand{\vfunc}{V}
\newcommand{\vset}{\mathcal{V}}
\newcommand{\vfuncl}{\vfunc_{\targetfunc}}
\newcommand{\vconv}{\vfunc^*}
\newcommand{\warmfunc}{k}
\newcommand{\warmset}{\mathcal{K}}
\newcommand{\vfunck}{\vfunc_{\warmfunc}}
\newcommand{\trajstandard}{\traj_{\state,\tvar}^{\ctrl,\dstb}}
\newtheorem{remark}{Remark}
\newtheorem{theorem}{Theorem}
\newtheorem{proposition}{Proposition}
\newtheorem{lemma}{Lemma}
\newcommand{\runningexample}[1]%
{
\textbf{Running example:}
\textit{#1}
}
\newcommand{\controlspace}{\mathcal{U}}
\newcommand{\disturbancespace}{\mathcal{D}}
\newcommand\blfootnote[1]{%
  \begingroup
  \renewcommand\thefootnote{}\footnote{#1}%
  \addtocounter{footnote}{-1}%
  \endgroup
}
\newcommand{\sgnote}[1]%
    {\textcolor{green}{\textbf{SG: #1}}}
\newcommand{\shnote}[1]%
    {\textcolor{blue}{\textbf{#1}}}
\newcommand{\remove}[1]%
    {\textcolor{red}{#1}}
\begin{document}

\title{Reachability-Based Safety Guarantees using Efficient Initializations}

\author{

\authorblockN{
Sylvia L. Herbert, Shromona Ghosh, Somil Bansal, and Claire J. Tomlin 
}
}

\maketitle
\begin{abstract}
Hamilton-Jacobi-Isaacs (HJI) reachability analysis is a powerful tool for analyzing the safety of autonomous systems.  This analysis is computationally intensive and typically performed offline.  Online, however, the autonomous system may experience changes in system dynamics, external disturbances, and/or the surrounding environment, requiring updated safety guarantees. 
Rather than restarting the safety analysis, we propose a method of ``warm-start'' reachability, which uses a user-defined  initialization (typically the previously computed solution). By starting with an HJI function that is closer to the solution than the standard initialization, convergence may take fewer iterations. 

In this paper we prove that warm-starting will result in guaranteed conservative solutions by over-approximating the states that must be avoided to maintain safety. We additionally prove that for many common problem formulations, warm-starting will result in exact solutions.
We demonstrate our method on several illustrative examples with a double integrator, and also on a more practical example with a 10D quadcopter model that experiences changes in mass and disturbances and must update its safety guarantees accordingly. We compare our approach to standard reachability and a recently proposed ``discounted'' reachability method, and find for our examples that warm-starting is $1.6$ times faster than standard and $6.2$ times faster than (untuned) discounted reachability.\vspace{-2em}
\end{abstract}

\IEEEpeerreviewmaketitle

\blfootnote{This research is supported by an NSF CAREER award, Sylvia Herbert's NSF GRFP, NSF's CPS FORCES and VeHICaL projects, the UC-Philippine-California Advanced Research Institute, the ONR MURI Embedded Humans, the DARPA Assured Autonomy program, and the SRC CONIX Center.\\ All authors are with the Department of Electrical Engineering and Computer Sciences at UC Berkeley. Contact info: \{\href{mailto:sylvia.herbert@berkeley.edu}{sylvia.herbert},
\href{mailto:shromona.ghosh@berkeley.edu}{shromona.ghosh},
\href{mailto:sylvia.herbert@berkeley.edu}{somil},
\href{mailto:tomlin@berkeley.edu}{tomlin}\}@berkeley.edu}

\section{Introduction}
\label{sec:intro}
As humanity increasingly relies on autonomous systems, ensuring provable safety guarantees and controllers for these systems is vital. To achieve safety for nonlinear systems, tools such as Hamilton-Jacobi-Isaacs (HJI) reachability analysis can provide both a guarantee and a corresponding control input \cite{bansal2017hamilton,Mitchell2005}.  Applications include collision avoidance \cite{Mitchell2005,chen2016robust}, safe tracking of online motion planners \cite{herbert2017fastrack,kousik2018bridging}, stormwater management \cite{chapman2018reachability}, and administering anesthesia \cite{kaynama12}. 
HJI reachability analysis is based on assumptions about system dynamics, external disturbances, and the surrounding environment. However in reality the dynamics, the disturbance bounds, or the environment may differ from the assumptions. In these situations the safety analysis must be adapted. 

Unfortunately, performing HJI reachability analysis is computationally intensive for large systems and cannot be computed efficiently as new information is acquired, thus and area of current research interest is developing frameworks for efficiently updating the computation as new information is acquired.  There are some methods for speeding up this computation  using decomposition \cite{Chen2016DecouplingJournal}, and there are other efficient approaches that require simplified problem formulations and/or dynamics \cite{frehse11, greenstreet98, kurzhanski00, kurzhanski02, maidens13,nilsson16, kim2019toolbox}.  
The methods in \cite{althoff15, chen13, dreossi16, frehse11,majumdar14}, can handle more complex dynamics, but may be less scalable or unable to represent complex sets. Efficient reachability analysis remains challenging for general system dynamics and problem setups.

Warm-starting in the optimization community involves using a initialization that acts as a ``best guess'' of the solution, and therefore may converge in fewer iterations (if convergence can be achieved). Recent work applied this warm-starting idea to create a ``discounted reachability'' formulation for infinite-time horizon problems \cite{akametalu2018minimum,fisac2019bridging}.  By using a discount factor, this formulation \textit{guarantees} convergence regardless of the initialization.  
However, due to this discount factor convergence rates can be very slow, and in practice the analysis may not converge numerically when convergence thresholds are too tight, or may converge incorrectly when convergence thresholds are too lenient.  In addition, parameter tuning of the discount factor can be time-intensive.  These issues reduce the computational benefit of warm-start reachability.

Until now there were no guarantees of convergence for warm-starting HJ reachability without using a discount factor. In this paper we prove that warm-start reachability with no discount factor will in general result in \textit{guaranteed conservative} safety analyses and controllers (i.e. the analysis over-approximates the set of states that are unsafe to enter).
Moreover, if the initialization is over-optimistic and therefore dangerous (i.e. the initialization underestimates the set of states that are unsafe to enter), we prove that warm-starting is \textit{guaranteed to converge exactly} to the true solution (here we use ``exact''  to mean numerically convergent \cite{mitchell2004toolbox}).

In addition to these proofs, we provide several common problem classes for which we can prove this exact convergence. We demonstrate these results on an illustrative example with a double integrator, and a more practical example of a realistic 10D quadcopter model that experiences changes in mass and disturbances and must update its safety guarantees accordingly. In these examples warm-start reachability is $1.6$ times faster than standard reachablity and $6.2$ times faster than (untuned) discounted reachability formulation. 

\newgeometry{top=.75in,bottom=.78in,right=0.75in,left=.75in}
\section{Problem Formulation}
\label{sec:formulation}
Consider an autonomous agent in an environment in the presence of external disturbance.  This environment contains a target set $\targetset$ that is meaningful to the agent: it can be either a set of goal states, or a set of unsafe states.  HJI reachability seeks to find the set of initial states for which the system acting optimally and under worst-case disturbances will end up in the target set $\targetset$ either at a particular time (backward reachable set, or BRS) or within a time horizon (backward reachable tube, or BRT). Optimal behavior of the system depends on the nature of the target set and can be formulated as a game: for a goal set, the control will seek to minimize distance to the goal whereas the worst-case disturbance will maximize distance to the goal.  For an unsafe set, the control will maximize and the disturbance will minimize. Both cases can be solved using HJI reachability analysis. Note that there are reach-avoid problems that seek to reach a goal set while avoiding unsafe sets.  There are also problems that use \textit{forward} reachable sets and tubes. More information can be found in \cite{bansal2017hamilton}.

The theory in this paper applies to BRTs with infinite-time horizons.  Typically this scenario is more interesting in the avoid case (where the system seeks to avoid an unsafe set of states forever), and will therefore be the focus of this paper. 
In this section we define the agent's dynamics and formally introduce HJI reachability analysis.

\subsection{Dynamic System Model}

We assume that the autonomous system (i.e. agent) has initial state  $\state \in \R^n$ and initial time $\tvar$, and evolves according to the ordinary differential equation (ODE):

\begin{equation}
\label{eq:dyn}
\begin{aligned}
\dot{\state} = \dyn(\state, \ctrl, \dstb), \quad \ctrl \in \cset, \dstb \in \dset.
\end{aligned}
\end{equation}

\noindent Here the system has a control $\ctrl$ and disturbance $\dstb$. We assume that these inputs are drawn from compact sets ($\cset$, $\dset$), and their signals over time ($\ctrl(\cdot)$, $\dstb(\cdot)$) are drawn from the set of measurable functions $\cfset:[\tvar,\thor]\rightarrow \cset$ and  $\dfset:[\tvar,\thor]\rightarrow \dset$. 

We assume that the flow field $\dyn: \mathbb{R}^{n}\times\cset\times\dset \rightarrow \mathbb{R}^{n}$ is uniformly continuous in time and Lipschitz continuous in $\state$ for fixed $\ctrl$ and $\dstb$. Under these assumption there exists a unique solution of these system dynamics for a given $\cfunc, \dfunc$~\cite{EarlA.Coddington1955}, providing trajectories of the system:
$\traj(\tdummy; \state, \tvar, \cfunc, \dfunc).$
This notation can be read as the state achieved at time $\tdummy$ by starting at initial state $\state$ and initial time $\tvar$, and applying input functions $\cfunc$ and $\dfunc$ over $[\tvar,\tau]$.  
For compactness we will refer to trajectories using $\trajstandard(\tdummy).$ Because we tend to solve reachability problems backwards in time, we use the notation that forward trajectories end at final time $\tdummy = \thor$, and start at an initial negative time $\tvar$. 

\runningexample{In this paper we use double integrator as a running example with dynamics,
\begin{align}
\label{eqn:doubleint}
\dot \state = \begin{bmatrix}
\dot p \\
\dot v
\end{bmatrix} = \begin{bmatrix}
v + \dstb\\
\ctrl b
\end{bmatrix},
\end{align}
\noindent with states position $p$ and velocity $v$, where $\ctrl\in[-1, 1]$ is acceleration. By default the disturbance is $\dstb=[0,0]$, and there is a default model parameter of $b = 1$. In later examples we will change the disturbance bound and model parameter.}

\subsection{Hamilton-Jacobi-Isaacs Reachability}
\subsubsection{Defining the Value Function}
We define a target function $\targetfunc(\state)$ whose subzero level set is the target set $\targetset$ describing the unsafe states, i.e. $\targetset = \{\state : \targetfunc(\state) \leq 0\}$.  Typically  $\targetfunc(\state)$ is defined as a signed distance function that measures distance to $\targetset$. This can be considered as measure of reward, with positive reward outside of the unsafe set and negative reward inside. 

This problem formulation seeks to find all trajectories that will enter $\targetset$ at any point in the time horizon, and therefore become unsafe:\vspace{-.5em}

\begin{equation}
    \label{eq:costfunctional}
    \costfunctional(\state,\tvar,\cfunc,\dfunc) = \min_{\tdummy \in [\tvar,0]} \targetfunc(\trajstandard(\tdummy)), ~ \tvar \leq 0.
\end{equation}

More specifically, the goal is to capture this minimum reward for \textit{optimal trajectories} of the system.  To do this we optimize for the optimal control signal that maximizes the reward (and drives the system away from the unsafe target set) and the worst-case disturbance signal that minimizes the reward. This leads to the value function:\vspace{-.5em}

\begin{equation}
    \label{eq:valuefunc}
    \vfunc(\state,\tvar) = \inf_{\gamma[\cfunc](\cdot)} \sup_{\cfunc} \Big\{\costfunctional\Big(\state,\tvar,\cfunc, \gamma[\cfunc](\cdot)\Big)\Big\}.
\end{equation}

\noindent Note that the disturbance is $\gamma = \{\dstb : \cset \rightarrow \dset\}$, which maps control inputs to disturbance inputs.  As in \cite{Mitchell2005}, we restrict the disturbance to draw from nonanticipative strategies.

Level sets of the value function correspond to level sets of the target function. If a state has a negative value, optimal trajectories starting from that state 
have entered the target set $\targetset$ sometime within the time horizon.  Therefore, the sub-zero level set of the value function comprises the backward reachable tube (BRT), notated as $\vset$: the set of states from which the system is guaranteed to enter the target set within the time horizon under optimal control and worst-case disturbance. For the infinite-time avoid BRT, we define the converged value function as  $\vconv(\state) = \lim_{\tvar  \rightarrow -\infty}\vfunc(\state,\tvar)$.  The subzero level set of this converged value function is the infinite-time avoid backwards reachable tube: $\vset^* = \{\state: \vfunc^*(\state) \le 0\}$. Trajectories initialized from states in this set will eventually enter the unsafe target set despite the control's best effort.  The complement of this set is therefore the safe set.

\runningexample{In the running example the target set $\targetset = \{(p,v) : |p| \leq 2, |v|<\infty\}$.
The target function $\targetfunc(\state)$ and its corresponding target set $\targetset$ can be seen in Fig.~\ref{fig:original_results} in green.  The converged BRT $\vset^*$ and value function $\vconv(\state)$  are in cyan.  If the system starts inside $\vset^*$, it will eventually enter the unsafe target set even while applying the optimal control (i.e. decelerating/accelerating as much as possible).}

\begin{figure}
\centering
\includegraphics[width=\columnwidth]{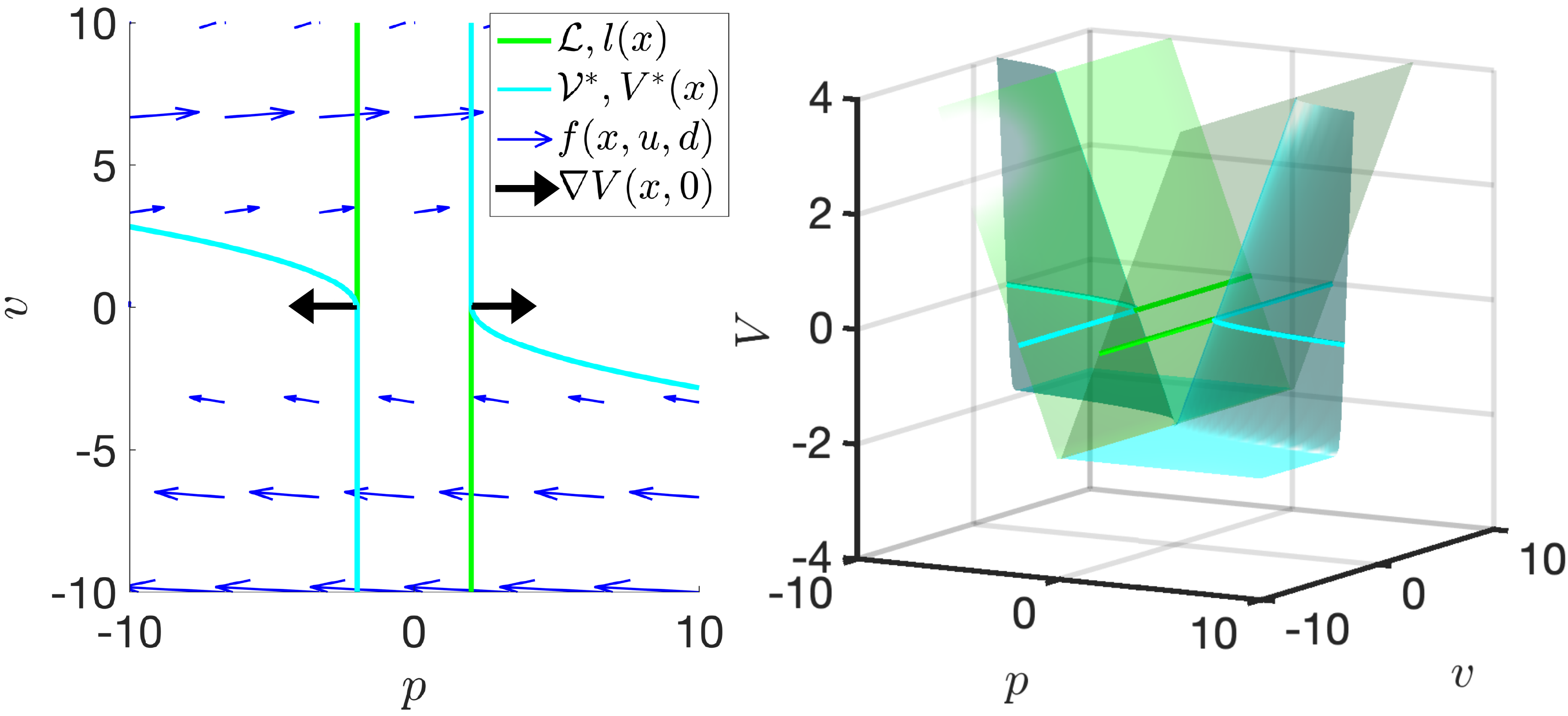}
\caption{Visualization of the running example using a double integrator model. The target set $\targetset$ and corresponding function $\targetfunc(\state)$ are in green. We initialize $\vfunc(\state,0)=\targetfunc(\state)$, and update the function using \eqref{eq:HJIVI} by optimizing over the inner product between the spatial gradients (seen for $\vfunc(\state,0)$ as black arrows) and the system dynamics (whose flow field is seen as blue arrows). The converged BRT $\vset^*$ and value function $\vfunc^*(\state)$ are in cyan.}
\vspace{-2em}
\label{fig:original_results}
\end{figure}
\subsubsection{Solving for the Value Function}
To solve this optimization problem for the value function, we discretize the state space and initialize the value function to be equal to the target function, $\vfunc(\state,0) = \targetfunc(\state)$.  We work backwards in time while updating the value function until the initial time has been reached (or in the infinite-horizon case, until convergence).
The change in $\vfunc(\cdot, \cdot)$ for each state backwards in time satisfies the Hamilton-Jacobi-Isaacs partial differential equation (HJI PDE)~\eqref{eq:HJIPDE}, which is used for solving backward reachable sets (BRSs): the set of initial states that will lead to entering the target set at \textit{exactly} $\tdummy = \thor$:
\begin{equation}
    \label{eq:HJIPDE}
    \begin{aligned}
    D_\tvar \vfunc(\state,\tvar)+ H\Big(\vfunc(\state,\tvar), \dyn(\state,\ctrl,\dstb)\Big) = 0.
    \end{aligned}
\end{equation}
where $H$ is the Hamiltonian defined in~\eqref{eq:ham} optimizes over the inner product between the spatial gradients of the value function and the flow field of the dynamics to compute the optimal control and disturbance inputs.
\begin{equation}
    \label{eq:ham}
    \begin{aligned}
    H\Big(\vfunc(\state,\tvar), \dyn(\state,\ctrl,\dstb)\Big) =&\\ \max_\ctrl \min_\dstb & \langle \nabla \vfunc(\state,\tvar), \dyn(\state,\ctrl,\dstb)\rangle.
        \end{aligned}
\end{equation}

\runningexample{For the running example the initial spatial gradients for $\vfunc(\state,0) = \targetfunc(\state)$ can be seen as black arrows in Fig. \ref{fig:original_results}. The Hamiltonian \eqref{eq:ham} will optimize over the inner product between these gradients and the flow field of the dynamics $\dyn(\state,\ctrl,\dstb)$, seen as blue arrows.}

Note that the HJI PDE \eqref{eq:HJIPDE} solves for $\costfunctional(\state,\tvar,\cfunc,\dfunc) =  \targetfunc(\trajstandard(\thor))$.
Because we are solving a BRT and want to capture the minimum value over the \textit{entire} time horizon (as in \eqref{eq:costfunctional}), we must include a minimization term, converting the HJI PDE to an HJI variational inequality (HJI VI).
\begin{equation}
\begin{aligned}
    \label{eq:HJIVI}
    \min\Big\{D_\tvar \vfunc(\state,\tvar)+H\Big(\vfunc(\state,\tvar),& \dyn(\state,\ctrl,\dstb)\Big),\\
    &\targetfunc(\state)-\vfunc(\state,\tvar)\Big\} = 0.
    \end{aligned}
\end{equation}

The term $\targetfunc(\state)-\vfunc(\state,\tvar)$ restricts the value function from becoming more positive than the target function, effectively enforcing that all trajectories that achieve negative reward at any time will continue to have negative reward for the rest of the time horizon. For more details on the derivation of this HJI VI and variations that include forward reachability and reach-avoid scenarios, please refer to \cite{margellos2011hamilton,fisac2015reach,bansal2017hamilton,Mitchell2005}.
We solve this HJI VI recursively using dynamic programming:\vspace{-.5em}
\begin{equation}
\label{eq:dynprogramming}
    \begin{aligned}
        \vfunc(\state,\tvar) = \max_{\cfunc}\min_{\dfunc}\min\Big\{&
        \inf_{\tdummy\in[\tvar,\tvar+dt)} \targetfunc\Big(\trajstandard(\tdummy)\Big),\\
        &\vfunc\Big(\trajstandard(\tvar+dt),\tvar+dt\Big)
        \Big\}.
        \end{aligned}
    \end{equation}
We use \eqref{eq:dynprogramming} to update the value of each discretized state backwards in time using the level set method toolbox and associated helperOC toolbox \cite{mitchell2004toolbox, chen2017helperOC}. At convergence the result is the infinite-horizon value function, whose subzero level set $\vset^*$ corresponds to the set of states that should be avoided in order to remain safe for all time.  Online the system avoids these states by solving for the instantaneous optimal control at state $\state$ using the Hamiltonian and inifinite-horizon value function:
\begin{equation}
    \label{eq:uOpt}
    \ctrl^*= \arg\!\max_\ctrl \min_\dstb \langle\nabla \vconv(\state), \dyn(\state,\ctrl,\dstb)\rangle.
\end{equation}
\subsection{Discounted Reachability}
In~\cite{akametalu2018minimum} and~\cite{fisac2019bridging}, the authors introduced a discounting factor $\lambda$ into the cost function~\eqref{eq:costfunctional} motivated by the sum of discounted rewards in reinforcement learning. Introducing $\lambda$, now updates the dynamic programming in~\ref{eq:dynprogramming} to solve, 
\begin{equation}
    \begin{aligned}
        \vfunc(\state,\tvar) = \max_{\cfunc}\min_{\dfunc}\min\Big\{&
        \inf_{\tdummy\in[\tvar,\tvar+dt)} \targetfunc\Big(\trajstandard(\tdummy)\Big)\exp{(\lambda\cdot \tau)},\\
        &\exp{(\lambda\cdot dt)}\vfunc\Big(\trajstandard(\tvar+dt),\tvar+dt\Big)
        \Big\},
    \end{aligned}
\end{equation}
where $\tvar \le 0$.
The authors show that this is a contraction mapping and, hence, $\vfunc(\state,0)$ can be initialized to any function (not just $\targetfunc(\state)$). 
The discounting allows $\vfuncl(\state, \tvar)$ to \textit{forget} any incorrect initializations over longer time horizon.
However, this formulation can still result in a slow convergence without careful tuning of $\lambda$, as we demonstrate in Section \ref{sec:exact_examples}.

\section{Warm-Start Reachability}
\label{sec:warmstartConv}
When there are minor changes to the problem formulation, such as changes to the model parameters, external disturbances, or target sets, computing $\vconv(\cdot, \cdot)$ requires recomputing the entire value function starting with the target function ($\vfunc(\state,0)=\targetfunc(\state)$).  Instead, we initialize with a previous computed (converged) value function. 
We define this warm-starting function as $\warmfunc(\state)$, with subzero level set $\warmset = \{\state: \warmfunc(\state) < 0\}$.

To develop the theory, we revisit the cost function~\eqref{eq:costfunctional}. We rewrite, 
$\costfunctional_{\targetfunc}(\state, \tvar, \ctrl(\cdot), \dstb(\cdot)) = \min\Big\{\inf_{\tdummy \in [\tvar,0)} \targetfunc(\trajstandard(\tdummy)), \targetfunc(\trajstandard(0))\Big \}$ and $\vfuncl(\state, \tvar)$ is defined as in~\eqref{eq:valuefunc} by replacing $\costfunctional$ by $\costfunctional_{\targetfunc}$,
\begin{equation}
    \label{eq:vl}
    \begin{split}
    \vfuncl(\state,\tvar) &= \max_{\cfunc}\min_{\dfunc} \costfunctional_{\targetfunc}(\state, \tvar, \ctrl(\cdot), \dstb(\cdot)) \\
    &= \max_{\cfunc}\min_{\dfunc} \min \Big\{
     \inf_{\tdummy \in [\tvar,0)} \targetfunc(\trajstandard(\tdummy)),\vfuncl(\trajstandard(0),0)\Big\},\\
    &= \max_{\cfunc}\min_{\dfunc} \min \Big\{
     \inf_{\tdummy \in [\tvar,0)} \targetfunc(\trajstandard(\tdummy)),\targetfunc(\trajstandard(0))\Big\},
    \end{split}
\end{equation}

$\vfuncl(\state, \tvar)$ is the solution to the following HJI-VI,
\begin{equation}
\label{eq:v_HJIVI}
    \begin{split}
        0=\min\Big\{&D_\tvar \vfuncl(\state,\tvar)+H_\targetfunc\Big(\vfuncl(\state,\tvar),\dyn(\state,\ctrl,\dstb)\Big),\\
        &\targetfunc(\state)-\vfuncl(\state, \tvar)\Big\},\\
        H_\targetfunc(\vfuncl(\state,\tvar)&,\dyn(\state,\ctrl,\dstb)) = \\
        &\max_u \min_d <\nabla \vfuncl(\state,\tvar),\dyn(\state,\ctrl,\dstb)>, \\
        \vfuncl(\state,0) &= \targetfunc(\state,0).
    \end{split}
\end{equation}
The converged value function is defined as, $\vconv_\targetfunc(\state) = \lim_{\tvar  \rightarrow -\infty}\vfunc_\targetfunc(\state,\tvar)$.

When we warm-start the computation of value function using $\warmfunc$, the cost function is given by,
\begin{equation}
    \label{eqn:warm_cost}
    \costfunctional_{\warmfunc}(\state, \tvar, \ctrl(\cdot), \dstb(\cdot)) = \min\Big\{\inf_{\tdummy \in [\tvar,0)} \targetfunc(\trajstandard(\tdummy)), \warmfunc(\trajstandard(0))\Big \}
\end{equation}
$\vfunck$ can be defined as in~\eqref{eq:valuefunc} with $\costfunctional = \costfunctional_{\warmfunc}$, i.e,
\begin{equation}
    \label{eq:vk}
    \begin{split}
    \vfunck(\state,\tvar) &= \max_{\cfunc}\min_{\dfunc} \costfunctional_{\warmfunc}(\state, \tvar, \ctrl(\cdot), \dstb(\cdot)) \\
    &=\max_{\cfunc}\min_{\dfunc} \min \Big\{
     \inf_{\tdummy \in [\tvar,0)} \targetfunc(\trajstandard(\tdummy)),\warmfunc(\trajstandard(0))\Big\}.
    \end{split}
\end{equation}
$\vfunck$ is the solution to the HJI-VI defined similarly as in~\eqref{eq:v_HJIVI} with $\vfunck(\state, 0) = \warmfunc(\state)$.
The converged value function is defined as, $\vconv_\warmfunc(\state) = \lim_{\tvar  \rightarrow -\infty}\vfunc_\warmfunc(\state,\tvar)$.

In this section we prove that the converged value function $\vconv_\warmfunc(\state)$ that is initialized as above $\vfunck(\state,0) = \warmfunc(\state)$ will always be more negative than the value function $\vconv_\targetfunc(\state)$ achieved by standard reachability (i.e. initialized as $\vfuncl(\state,0)=\targetfunc(\state)$).  For the case of avoiding an unsafe set, this means that the relationship between the functions' BRTs (i.e. subzero level sets) is $\vset_\warmfunc^*\supseteq \vset_\targetfunc^*$. In other words, $\vset_\warmfunc^*$ is a conservative over-approximation of $\vset_\targetfunc^*$.  We will prove that for certain conditions (more explicitly, when $\warmfunc(\state)\geq\vconv_\targetfunc(\state)$), we can guarantee that the resulting value function and BRT will be exact.

\subsection{Conservative Warm-Start Reachability}
If $[\vfunck(\state,0) = \warmfunc(\state)]\leq\vconv_\targetfunc(\state)$, a contraction mechanism is required to raise $\vfunck^*(\state)$ towards the true solution $\vconv_\targetfunc(\state)$.  Recall the HJI VI from \eqref{eq:HJIVI}. Contraction may happen naturally, when the left hand side of the minimization (the HJI PDE) ``pulls the system up'' due to the Hamiltonian.  However, there are no guarantees that this contraction will happen, and the new value function may get stuck in a local solution, $\vfunck^*(\state)\leq \vfuncl^*(\state)$. This will result in a conservative BRT.

\begin{theorem}
\label{thm:allk}
For all initializations of $\vfunck(\state, 0) = \warmfunc(\state)$, the result will be conservative, i.e. 
\begin{equation}
    \forall \state, \tvar<0, \; \vfunck(\state, \tvar) \leq \vfuncl(\state, \tvar) 
\end{equation}
\end{theorem}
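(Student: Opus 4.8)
The plan is to establish the pointwise inequality $\costfunctional_\warmfunc \leq \costfunctional_\targetfunc$ for every fixed pair of signals $\cfunc,\dfunc$ and every $\tvar<0$, and then to lift it to the value functions using monotonicity of the $\max_\cfunc\min_\dfunc$ operation. First I would fix $\state$, $\tvar<0$, $\cfunc$ and $\dfunc$, and write out the two cost functions from \eqref{eq:vl} and \eqref{eq:vk}. They share the running term $\inf_{\tdummy\in[\tvar,0)}\targetfunc(\trajstandard(\tdummy))$ and differ only in the terminal term, $\targetfunc(\trajstandard(0))$ for $\costfunctional_\targetfunc$ versus $\warmfunc(\trajstandard(0))$ for $\costfunctional_\warmfunc$.

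The crux is that, because $\tvar<0$, the half-open interval $[\tvar,0)$ has $0$ as a limit point. Since $\trajstandard(\cdot)$ is continuous (it solves \eqref{eq:dyn} with $\dyn$ Lipschitz in $\state$) and $\targetfunc$ is continuous, the map $\tdummy\mapsto\targetfunc(\trajstandard(\tdummy))$ is continuous, so a standard limit argument gives $\inf_{\tdummy\in[\tvar,0)}\targetfunc(\trajstandard(\tdummy))\leq\targetfunc(\trajstandard(0))$: for any $\epsilon>0$, continuity supplies a time just below $0$ at which the running cost is within $\epsilon$ of $\targetfunc(\trajstandard(0))$. Hence the terminal term in $\costfunctional_\targetfunc$ is never active and $\costfunctional_\targetfunc = \inf_{\tdummy\in[\tvar,0)}\targetfunc(\trajstandard(\tdummy))$.

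The comparison is then immediate: because $\min\{A,B\}\leq A$, we obtain $\costfunctional_\warmfunc = \min\{\inf_{\tdummy\in[\tvar,0)}\targetfunc(\trajstandard(\tdummy)),\,\warmfunc(\trajstandard(0))\}\leq\inf_{\tdummy\in[\tvar,0)}\targetfunc(\trajstandard(\tdummy))=\costfunctional_\targetfunc$, with no assumption on $\warmfunc$ at all. This is exactly what makes the statement hold for all initializations: at any strictly negative time the running cost already dominates the standard problem's terminal cost, so replacing the terminal $\targetfunc$ by an arbitrary $\warmfunc$ can only lower the cost. I would finish by applying $\min_\dfunc$ and then $\max_\cfunc$ (equivalently $\inf_\gamma\sup_\cfunc$ over nonanticipative strategies as in \eqref{eq:valuefunc}) to both sides; these operations are monotone, so the pointwise bound is preserved and yields $\vfunck(\state,\tvar)\leq\vfuncl(\state,\tvar)$ for all $\state$ and $\tvar<0$.

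The only delicate point, and the one I would take care with, is the limit argument collapsing $\costfunctional_\targetfunc$ to its running term on the half-open interval; this is what eliminates any dependence on the sign relationship between $\warmfunc$ and $\targetfunc$, and everything else is routine monotonicity. It is worth noting that the argument is genuinely special to $\tvar<0$: at $\tvar=0$ the trajectory is trivial and $\vfunck(\state,0)=\warmfunc(\state)$ can exceed $\vfuncl(\state,0)=\targetfunc(\state)$, which is precisely why the theorem excludes that time.
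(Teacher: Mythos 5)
Your proof is correct, but it takes a genuinely different route from the paper's. The paper proves Theorem~\ref{thm:allk} by splitting into two cases: when $\warmfunc(\state) < \targetfunc(\state)$ everywhere, it compares the terminal terms of the two cost functionals directly and invokes monotonicity of $\max_{\cfunc}\min_{\dfunc}$; when $\warmfunc(\state) \geq \targetfunc(\state)$ everywhere, it passes to an infinitesimal backward time $0^-$ and argues that $\vfunck(\state,0^-) = \min\{\warmfunc(\state),\targetfunc(\state)\}$, which reduces that case to the first. Your single observation --- that for any $\tvar<0$, continuity of $\tdummy \mapsto \targetfunc(\trajstandard(\tdummy))$ on $[\tvar,0]$ forces $\inf_{\tdummy\in[\tvar,0)}\targetfunc(\trajstandard(\tdummy)) \leq \targetfunc(\trajstandard(0))$, so the terminal term of $\costfunctional_{\targetfunc}$ is inactive and $\costfunctional_{\warmfunc} \leq \costfunctional_{\targetfunc}$ holds for \emph{arbitrary} $\warmfunc$ --- eliminates the case split entirely. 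This buys two things: it covers mixed initializations where $\warmfunc - \targetfunc$ changes sign across the state space (which the paper's two globally-stated cases do not literally address without an implicit pointwise patching, and which matter in practice, e.g.\ the zero initialization of Fig.~\ref{fig:conservative_results}c), and it replaces the paper's informal $0^-$ limiting argument with a clean $\epsilon$-argument. What the paper's route buys in exchange is the explicit identity $\vfunck(\state,0^-)=\min\{\warmfunc(\state),\targetfunc(\state)\}$, which mirrors how the obstacle term $\targetfunc(\state)-\vfunc(\state,\tvar)$ in the HJI VI~\eqref{eq:HJIVI} acts at the first backward step of the numerical scheme. Your closing monotonicity step, lifting the pointwise bound through $\min_{\dfunc}$ and $\max_{\cfunc}$ (or $\inf_\gamma\sup_{\cfunc}$ over nonanticipative strategies), is standard and correctly handled, as is your remark on why the claim is restricted to $\tvar<0$.
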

\begin{proof}
We prove that $\vfunck(\state, \tvar) \leq \vfuncl(\state, \tvar)$ for two cases, (a) $\warmfunc(\state) < \targetfunc(\state)$ and (b) $\warmfunc(\state) \geq \targetfunc(\state)$. 
\\
\paragraph{ $\warmfunc(\state) < \targetfunc(\state)$}
For $\forall \state, \tvar < 0$, let $\vfuncl(\state, \tvar)$ be defined as~\eqref{eq:vl} and $\vfunck(\state, \tvar)$ be defined as~\eqref{eq:vk}.
At $\tvar=0$, we have $\Big[\vfunck(\state, 0) = \warmfunc(\state)\Big] < \Big[\targetfunc(\state) = \vfuncl(\state, 0)\Big] \Rightarrow \vfunck(\state, 0) < \vfuncl(\state, 0)$. 
For any $\tvar < 0$:
\begin{equation}
    \label{eq:time}
    \begin{aligned}
    \vfunck(\state,\tvar) &= \max_{\cfunc}\min_{\dfunc} \min \Big\{
     \inf_{\tdummy \in [\tvar,0)}\targetfunc(\trajstandard(\tdummy)), \warmfunc( \trajstandard(0))\Big\},\\
    &\leq \max_{\cfunc}\min_{\dfunc} \min \Big\{
     \inf_{\tdummy \in [\tvar,0)}\targetfunc(\trajstandard(\tdummy)), \targetfunc(\trajstandard(0))\Big\},\\
    &= \vfuncl(\state,\tvar).
    \end{aligned}
\end{equation}
The second inequality follows from the fact that $\warmfunc(\state) < \targetfunc(\state)~\forall \state \in \R^{n}$.
Hence, $\forall \state, \tvar$, we have $\vfunck(\state, \tvar) \leq \vfuncl(\state, \tvar)$.

Finally, $\tvar \rightarrow -\infty$, we have $\vconv_\warmfunc(\state) \leq \vconv_\targetfunc(\state)$.
\\
\paragraph{ $\warmfunc(\state) \geq \targetfunc(\state)$}
When $\tvar = 0$, $\Big[\vfunck(\state, 0) = \warmfunc(\state)\Big] \geq\Big[ \targetfunc(\state) = \vfuncl(\state, 0)\Big]$. 
For a time instance $\tvar = 0^-$,
\begin{equation}
\begin{aligned}
    \vfunck(\state, \tvar) &= \max_{\cfunc}\min_{\dfunc} \min \Big\{\inf_{\tdummy \in [\tvar,0)} \targetfunc(\trajstandard(\tdummy)),\warmfunc(\trajstandard(0))\Big\}\\
    &= \min\{\targetfunc(\trajstandard(0^-)), \warmfunc(\trajstandard(0)\} \\
    &= \targetfunc(\trajstandard(0^-)) = \vfuncl(\state, \tvar).
\end{aligned}
\end{equation} 
We can re-write~\eqref{eq:vl} and~\eqref{eq:vk} by replacing $0$ by $0^-$. The rest follows from proof of case (a). 
Here $0^-$ implies an infinitesimally small change in time and we are effectively computing $\vfunck(\state, 0^-) = \min(\warmfunc(\state), \targetfunc(\state))$ and treating $\vfunck(\state, 0) = \vfunck(\state, 0^-)$. One could derive the same proof by considering $\vfunck(\state, 0) = \min(\warmfunc(\state), \targetfunc(\state))$.
\end{proof}
In other words, the converged warm-starting solution will never be \textit{more} conservative than the initialization, and at least as conservative as the exact solution.

\subsection{Exact Warm-start Reachability}
In the case in which $\warmfunc(\state)\geq \vconv_\targetfunc(\state)$, we are additionally guaranteed to recover the \textit{exact} solution.
\begin{theorem}
\label{theorem:equality}
If we warm-start with $\vfunck(\state, 0) = \warmfunc(\state)$, such that $\forall \state \; \warmfunc(\state) \geq \vconv_{\targetfunc}(\state)$, then,
\begin{equation}
    \vconv_\warmfunc(\state) = \vconv_\targetfunc(\state)
\end{equation}
\end{theorem}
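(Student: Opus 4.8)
The plan is to upgrade the one-sided bound of Theorem~\ref{thm:allk} to an equality. Because Theorem~\ref{thm:allk} holds for \emph{every} initialization, it already applies here; letting $\tvar \to -\infty$ in $\vfunck(\state,\tvar) \le \vfuncl(\state,\tvar)$ gives $\vconv_\warmfunc(\state) \le \vconv_\targetfunc(\state)$. Thus the entire task reduces to proving the reverse inequality $\vconv_\warmfunc(\state) \ge \vconv_\targetfunc(\state)$ using the extra hypothesis $\warmfunc(\state) \ge \vconv_\targetfunc(\state)$, after which equality is immediate.

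For the reverse inequality I would isolate two properties of the map $g \mapsto \vfunc_g$ that sends a terminal condition $g$ to the value function defined by \eqref{eq:vk} with $g$ in place of $\warmfunc$; in the paper's notation $\vfuncl$ and $\vfunck$ are the instances $g=\targetfunc$ and $g=\warmfunc$. The first property is \emph{monotonicity}: if $g_1 \ge g_2$ pointwise, then $\vfunc_{g_1} \ge \vfunc_{g_2}$ pointwise. This is exactly the manipulation already used in Theorem~\ref{thm:allk}, since enlarging the terminal term inside the inner $\min$ can only raise the objective and the outer $\max_{\cfunc}\min_{\dfunc}$ preserves the ordering. The second is a \emph{fixed-point} property of the converged solution: warm-starting with $g=\vconv_\targetfunc$ reproduces it, that is $\vfunc_{\vconv_\targetfunc}(\state,\tvar) = \vconv_\targetfunc(\state)$ for all $\tvar \le 0$.

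Granting these, the argument closes immediately. The hypothesis $\warmfunc \ge \vconv_\targetfunc$ together with monotonicity gives $\vfunck(\state,\tvar) \ge \vfunc_{\vconv_\targetfunc}(\state,\tvar) = \vconv_\targetfunc(\state)$ for every $\tvar \le 0$, and sending $\tvar \to -\infty$ yields $\vconv_\warmfunc(\state) \ge \vconv_\targetfunc(\state)$. Combined with the bound inherited from Theorem~\ref{thm:allk}, this establishes $\vconv_\warmfunc(\state) = \vconv_\targetfunc(\state)$.

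The main obstacle is the fixed-point property, which carries the real content. I would derive it from the dynamic-programming (semigroup) principle behind \eqref{eq:dynprogramming} together with time-invariance of the dynamics in \eqref{eq:dyn}. Writing $G_T$ for the standard value function over a horizon of length $T$, the semigroup principle states that running the game for a horizon $T_1$ with terminal condition $G_{T_2}$ yields $G_{T_1+T_2}$; since increasing the horizon only lengthens the interval over which $\targetfunc$ is minimized, $G_T$ decreases monotonically to $\vconv_\targetfunc$, and passing to the limit gives $\vfunc_{\vconv_\targetfunc}(\cdot,-T_1) = \lim_{T_2 \to \infty} G_{T_1+T_2} = \vconv_\targetfunc$. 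The delicate steps are justifying the interchange of this limit with the $\max_{\cfunc}\min_{\dfunc}$ operator (which I expect to follow from monotone convergence and continuity of $\min$ and $\inf$) and the correct concatenation of nonanticipative strategies across the two trajectory segments; by contrast the monotonicity step and the final chaining are routine.
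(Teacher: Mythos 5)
Your proposal follows the paper's proof essentially step for step: the paper likewise combines Theorem~\ref{thm:allk} for the inequality $\vconv_\warmfunc \leq \vconv_\targetfunc$ with Lemma~\ref{lem:kgreaterV}, whose proof sets $\warmfunc'(\state) = \vconv_\targetfunc(\state)$ and uses exactly your two ingredients --- monotonicity of the value in the terminal condition (the manipulation of \eqref{eq:time}) and the fixed-point property $\vfunc_{\warmfunc'}(\state,\tvar) = \vconv_\targetfunc(\state)$ --- before chaining the two bounds into equality. The only difference is that you sketch a semigroup justification of the fixed-point step, which the paper simply asserts on the grounds that $\vconv_\targetfunc$ is the converged value function for $\targetfunc$.
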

The proof of Theorem~\ref{theorem:equality} follows from Theorem~\ref{thm:allk} and Lemma~\ref{lem:kgreaterV}, defined as:
\begin{lemma}
\label{lem:kgreaterV}
If $\forall \state \; \warmfunc(\state) \geq \vconv_{\targetfunc}(\state)$, we have,
\begin{equation}
   \vfunck(\state, \tvar) \geq \vconv_{\targetfunc}(\state)\quad
    \forall \state, \tvar <0
\end{equation}
\end{lemma}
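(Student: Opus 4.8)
The plan is to recognize $\vfunck(\state,\tvar)$ as the value of a finite-horizon game whose \emph{terminal payoff} is $\warmfunc$, while $\vconv_\targetfunc$ is the \emph{fixed point} of the very same dynamic-programming operator, and then to chain the two through a monotonicity argument. Comparing~\eqref{eq:vk} with the definition of $\vconv_\targetfunc$, the only difference between $\vfunck(\state,\tvar)$ and $\vconv_\targetfunc(\state)$ lies in which function is evaluated at the trajectory endpoint $\trajstandard(0)$: the running term $\inf_{\tdummy\in[\tvar,0)}\targetfunc(\trajstandard(\tdummy))$ is identical in both. So the whole lemma reduces to transferring the pointwise hypothesis $\warmfunc\ge\vconv_\targetfunc$ through this single endpoint slot.

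First I would record a monotonicity property of the operator in its terminal payoff: if $W_1\ge W_2$ pointwise, then, holding $\cfunc,\dfunc$ (and hence the running term) fixed, $\min\{\inf_{\tdummy}\targetfunc(\trajstandard(\tdummy)), W_1(\trajstandard(0))\}\ge \min\{\inf_{\tdummy}\targetfunc(\trajstandard(\tdummy)), W_2(\trajstandard(0))\}$, and since $\max_{\cfunc}\min_{\dfunc}$ preserves pointwise inequalities, the resulting value functions inherit $\ge$; this step is routine. Second, and this is the crux, I would establish the self-consistency (fixed-point) identity for the converged value function: for every $\tvar<0$,
\[
\vconv_\targetfunc(\state) = \max_{\cfunc}\min_{\dfunc}\min\Big\{\inf_{\tdummy\in[\tvar,0)}\targetfunc(\trajstandard(\tdummy)),\, \vconv_\targetfunc(\trajstandard(0))\Big\}.
\]
Because $\dyn$ and $\targetfunc$ carry no explicit time dependence, $\vfuncl(\state,\tvar)$ depends on $\tvar$ only through the horizon length $|\tvar|$, and the recursion~\eqref{eq:dynprogramming} shows this family is monotonically non-increasing in $|\tvar|$ and converges to $\vconv_\targetfunc$; the displayed identity is precisely the statement that $\vconv_\targetfunc$ is invariant under one further application of the Bellman operator.

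Combining the two, since $\warmfunc\ge\vconv_\targetfunc$ on all of $\R^n$ by hypothesis, monotonicity applied with $W_1=\warmfunc$ and $W_2=\vconv_\targetfunc$ gives, for every $\tvar<0$,
\[
\vfunck(\state,\tvar) \ge \max_{\cfunc}\min_{\dfunc}\min\Big\{\inf_{\tdummy\in[\tvar,0)}\targetfunc(\trajstandard(\tdummy)),\, \vconv_\targetfunc(\trajstandard(0))\Big\} = \vconv_\targetfunc(\state),
\]
where the final equality is the self-consistency identity. This is exactly the claim of the lemma.

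The hard part I expect is the fixed-point identity itself: passing the infinite-horizon limit inside the $\max_{\cfunc}\min_{\dfunc}$ requires continuity of the game-value operator under monotone limits, which is delicate for the nonanticipative-strategy formulation (a $\sup$ need not commute with a decreasing limit). I would handle it by exploiting the monotone convergence $\vfuncl(\cdot,\tvar)\downarrow\vconv_\targetfunc$ together with the uniform continuity of the value function inherited from the Lipschitz dynamics, or else simply invoke the standard dynamic-programming principle for infinite-horizon HJI reachability from the cited references. Everything else is bookkeeping with the monotone operations $\inf$, $\min$, and $\max_{\cfunc}\min_{\dfunc}$.
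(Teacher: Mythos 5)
Your proposal is correct and follows essentially the same route as the paper's proof: the paper introduces $\warmfunc'(\state) = \vconv_{\targetfunc}(\state)$, applies the same terminal-payoff monotonicity argument (citing the logic of \eqref{eq:time}) to get $\vfunc_{\warmfunc'}(\state,\tvar) \leq \vfunck(\state,\tvar)$, and then invokes exactly your fixed-point identity --- that $\vfunc_{\warmfunc'}(\state,\tvar) = \vconv_{\targetfunc}(\state)$ for all $\tvar$ because $\vconv_{\targetfunc}$ is the converged value function. The only difference is that you explicitly flag and sketch a justification for the fixed-point step (monotone convergence of $\vfuncl(\cdot,\tvar)$ plus continuity, or citing the standard dynamic-programming principle), which the paper asserts without further argument.
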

\begin{proof}
To prove Lemma~\ref{lem:kgreaterV}, let us consider $\warmfunc'(\state) = \vconv_{\targetfunc}(\state)$. 
For $\tvar < 0$, using dynamic programming we have, 
\begin{equation}
    \begin{split}
    \vfunc_{\warmfunc'}(\state,\tvar) &= \max_{\cfunc}\min_{\dfunc} \min \Big\{
     \inf_{\tdummy \in [\tvar,0)} \targetfunc(\trajstandard(\tdummy)),\vfunc_{\warmfunc'}(\trajstandard(0),0)\Big\},\\
    &=\max_{\cfunc}\min_{\dfunc} \min \Big\{
     \inf_{\tdummy \in [\tvar,0)} \targetfunc(\trajstandard(\tdummy)),\warmfunc'(\trajstandard(0))\Big\}.
    \end{split}
\end{equation}
For any $\tvar < 0$, we can follow the same logic as in \eqref{eq:time}. 
Hence, $\forall \state, \tvar$, we have $\vfunc_{\warmfunc'}(\state, \tvar) \leq \vfunck(\state, \tvar)$.
Moreover, since $\vfunc_{\warmfunc'}(\state, 0) = \vconv_{\targetfunc}(\state)$, we know that $\vfunc_{\warmfunc'}(\state, \tvar) = \vconv_{\targetfunc}(\state)\quad \forall \tvar$ since $\vconv_{\targetfunc}(\state)$ is the converged value function corresponding to $\targetfunc(\state)$. Hence, $\vfunck(\state, \tvar) \geq \vconv_{\targetfunc}(\state) \quad \forall \state, \tvar < 0$.
Since this holds for all time, it also holds for $\tvar \rightarrow -\infty$: $\vconv_\warmfunc(\state) \geq \vconv_\targetfunc(\state)$.
\end{proof}
\begin{figure*}
\centering
\includegraphics[width=1.8\columnwidth]{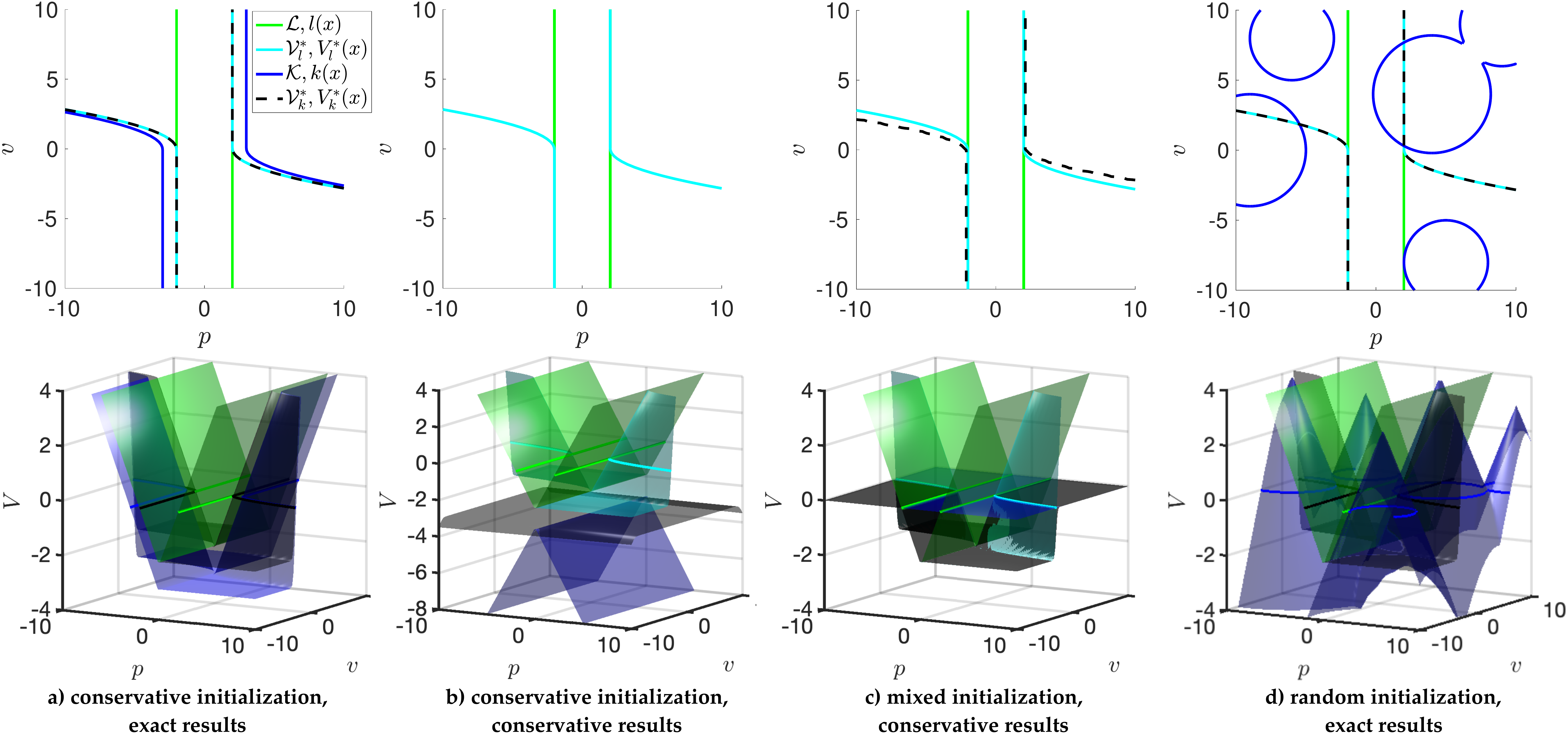}
\caption{The top row shows the target sets and backward reachable tubes, which are the subzero level sets of the target and value functions (bottom row). For all examples shown, green is the target set and function, cyan is the true BRT and converged value function, blue is the warm-start initialization, and black is the warm-start converged value function. (a) conservative warm-start initialization that converges exactly. (b) somewhat unrealistic conservative warm-start initialization that gets stuck in a local solution and results in a conservative value function ($\warmset$ and $\vset^*_\warmfunc$ are not visualized because they include the entire state space). (c) initializing at zero everywhere ($\warmset$ not visualized because it includes the entire state space) results in a slightly conservative BRT. (d) to demonstrate how well this algorithm works in practice, we initialize with the complement of random circles, resulting in exact convergence.}
\label{fig:conservative_results}
\vspace{-1.5em}
\end{figure*}
\begin{proof}
To prove Theorem~\ref{theorem:equality}, we have $\forall \state$, \vspace{-.5em}
\begin{equation}
    \begin{split}
        \vconv_\warmfunc(\state) &\leq \vconv_\targetfunc(\state) \quad \text(Theorem~\ref{thm:allk}) \\
        \vconv_\warmfunc(\state) &\geq \vconv_\targetfunc(\state) \quad \text(Lemma~\ref{lem:kgreaterV}) \\
        \Rightarrow \vconv_\warmfunc(\state) &= \vconv_\targetfunc(\state)
    \end{split}
    \vspace{-.5em}
\end{equation}
\end{proof}
\vspace{-1.5em}

\section{Conservative Warm-Start Examples}
\label{sec:examples_conserv}
Below we demonstrate several scenarios using the running example that result in conservative solutions.  All experiments were run on a desktop computer with an Intel Core i7-5280K CPU $@ 3.30$GHz $\times 12$ processor and $12.8$GB of memory. For all examples the value function is considered converged when the maximum change of value in one time step ($dt = 0.01$) is less than $0.001$.

\subsection{Conservative Initialization with Exact Results}
In practice we find that frequently the value function converges to the exact solution even when initialized below the converged value function, i.e. when $\warmfunc(\state) < \vconv_\targetfunc(\state)$.  Fig. \ref{fig:conservative_results}a demonstrates one such example. The warm-start function $\warmfunc(\state)$ (seen in blue) is initialized to be the original value function acquired when $\ctrl\in[-.7,.7]$.  If the control authority increases to $\ctrl\in[-1,1]$, standard reachability converges to the cyan value function $\vconv_\targetfunc(\state)$. In black is the value function under $\vconv_\warmfunc(\state)$ that was initialized by $\warmfunc(\state)$ instead of $\targetfunc(\state)$. Convergence occurs due to the Hamiltonian in \eqref{eq:HJIVI} contracting the value function until the solution has been reached.

\subsection{Conservative Initialization with Conservative Results}
To find a result that does not converge exactly and instead results in a conservative solution, we initialize with $\warmfunc(\state)<\vconv_\targetfunc(\state)$ that has incorrect gradients everywhere, as shown in blue in Fig \ref{fig:conservative_results}b.  Note that this is a fairly unrealistic initial estimate for the true value function, as as the subzero level set $\warmset$ is the entire state space.  As the Hamiltonian contracts the function, convergence occurs at a local solution when the gradients of the value function approach zero.  In black we see that $\vconv_\warmfunc(\state)<\vconv_\targetfunc(\state)$, and the BRT $\vset_\warmfunc^*$ is the entire state space.

\subsection{Mixed Initalization with Conservative Results}
In Fig.~\ref{fig:conservative_results}c we initialize the warm-starting function as $[\vfunck(\state,0)=\warmfunc(\state)]=0$ (blue) so that $\warmfunc(\state)\geq \vconv_\targetfunc(\state)$ for a subset of the state space. Where $\warmfunc(\state) \geq \vconv_\targetfunc(\state)$ convergence is nearly exactly (black), with slight conservativeness introduced at the boundary of where $\warmfunc(\state) = \vconv_\targetfunc(\state)$.  Where $\warmfunc(\state) < \vconv_\targetfunc(\state)$ the warm-start solution remains flat at $\vconv_\warmfunc(\state)=0$.  The resulting BRT $\vset_\warmfunc^*$ is a slight over-approximation of $\vset_\targetfunc^*$.

\subsection{Random Initialization with Exact Results}
Though we are able to find cases that lead to conservative results, these cases are hard to come by. In almost all initializations the correct value function was achieved exactly.  Fig. \ref{fig:conservative_results}d demonstrates this by initializing $\vset_\warmfunc$ with randomly spaced and sized circles. Similar exact results were found for a variety of system dynamics and problem formulations.

\section{Exact Warm-Start Examples} \label{sec:exact_examples}
\begin{figure}
\centering
\includegraphics[width=.85\columnwidth]{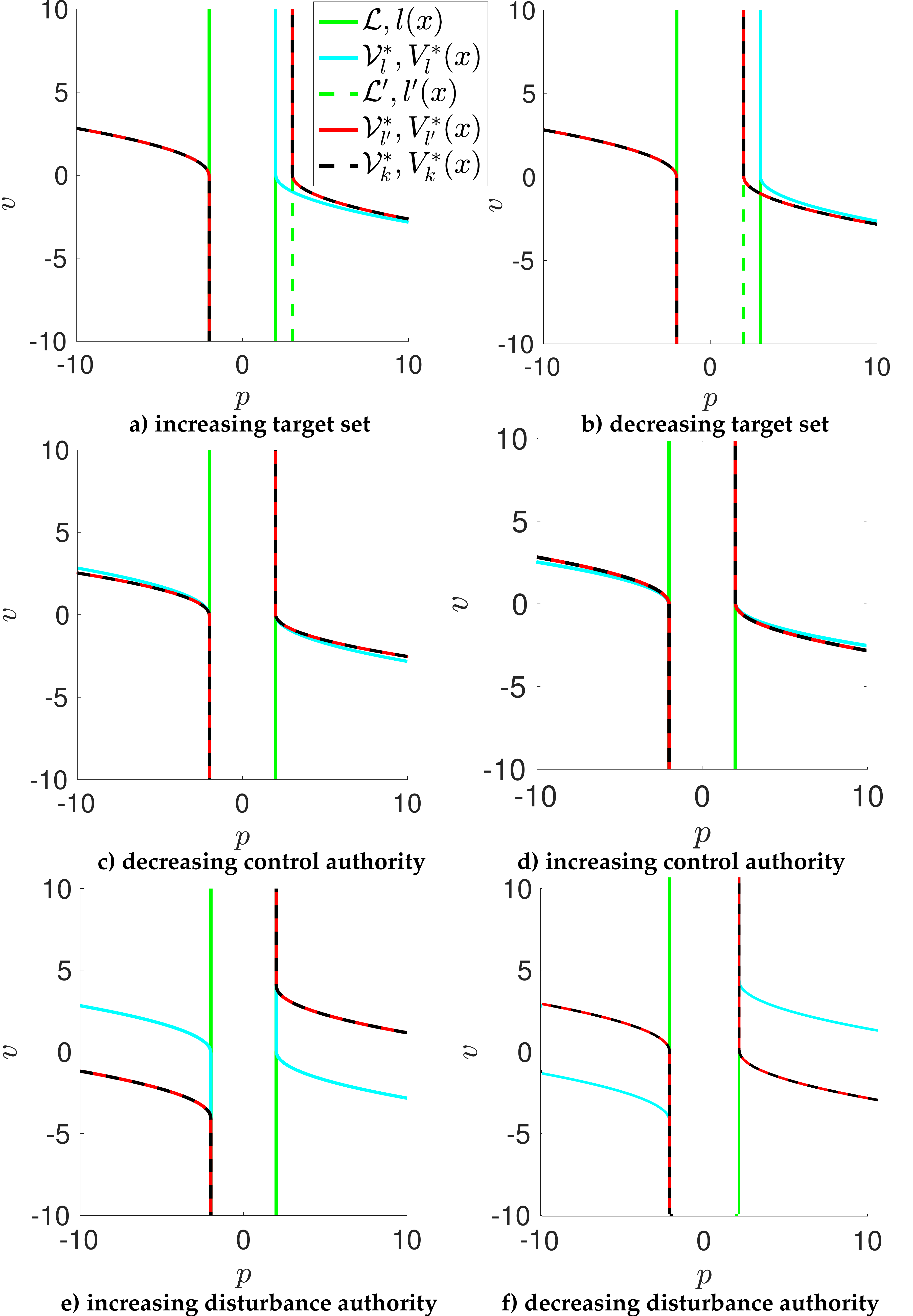}
\caption{For all examples shown, the region between the green lines is the target set. Similarly cyan marks the boundary of the original BRT, red marks the BRT based on new conditions, and black is the boundary of the warm-start converged BRT. The left column shows cases in which the exact solution (red) can be achieved by warm-starting (black) from a previous solution (cyan). The right column shows cases in which warm-starting (black) is guaranteed to at worst remain at the initialization (cyan) or at best will achieve the exact solution (red). In practice we generally achieve the exact solution.}
\label{fig:exact_results}
\vspace{-1.5em}
\end{figure}

\label{sec:ex_conserv}
Though in general we may not know if $\warmfunc(\state)\geq\vconv_\targetfunc(\state)$, there are some cases in which this can be proved, and therefore the exact solution can be recovered.  For all following examples $\vconv_\targetfunc(\state),$ is the original value function and $ \vset_\targetfunc^*$ is the corresponding BRT acquired from standard reachability using the default running example. Each subsection introduces changes to the problem formulations, resulting in a new $\vconv_{\targetfunc'}(\state),  \vset_{\targetfunc'}^{*}$ acquired from standard reachability. Finally, $\vconv_\warmfunc(\state), \vset^*_\warmfunc$ are the value function and BRT acquired by warm-starting with $\warmfunc(\state) = \vconv_\targetfunc(\state)$ with the changed problem formulation. We further show what happens when the conditions that lead to exact results are reversed. In these cases we cannot guarantee exact convergence, but can guarantee that in each iteration the function will either reduce conservativeness or remain in a local solution (i.e. $\warmfunc\leq \vfunck(\state,\tvar) \leq \vfuncl^*(\state)\; \forall \state, \tvar$).
We show in Table~1 a time comparison for each example to standard and discounted reachability, shown both in runtime and number of iteration steps. For the exact cases we find that warm-starting is consistently faster. For comparison to discounted reachability, we used a discount factor of 0.999 and annealed to a discount factor of 1 once convergence was reached (see \cite{akametalu2018minimum} for details).

\subsection{Changing Target Set}
\indent When the target set increases
($\targetset' \supseteq \targetset$), 
setting the initialization to the previously converged value results in $[\warmfunc(\state) = \vconv_{\targetfunc}(\state)]\geq\vconv_{\targetfunc'}(\state)$ and therefore exact convergence is guaranteed. Refer to Proposition~\ref{prop:targetset} and its proof is in Appendix~\ref{subsec:targ}. We demonstrate this in Fig.~\ref{fig:exact_results}a, where the target sets are in green (solid for $\targetset$, dashed for $\targetset'$). When warm-starting from the original BRT $\vset_{\targetfunc}^*$ (cyan), we are able to recover the new BRT  $\vset_{\targetfunc'}^*$ (red) exactly, resulting in $\vset_{\warmfunc}^*$ (black). 
We similarly show the reverse case for a decreasing target set in Fig.~\ref{fig:exact_results}b. See Remark~\ref{rem:1} for details.

\subsection{Changing Control Authority}
In many applications the control authority can change over time. This can happen because of several reasons; for example, increasing the mass of a quadrotor leads to a reduction in its effective control authority.  We can explicitly modify $\controlspace$ when there is a change in the control bounds or in a model parameter which updates the effective control authority. When the control space is decreased, i.e. $\controlspace' \subseteq \controlspace$, initializing with the previously converged  value function will lead to $[\warmfunc(\state) = \vconv_{\targetfunc}(\state)]\geq\vconv_{\targetfunc'}(\state)$ 
and therefore exact convergence is guaranteed. Refer to Proposition~\ref{prop:controlbounds} and its proof in Appendix~\ref{subsec:cont}. 

To demonstrate this case of reduced control authority we vary the parameter $b$ in the system model \eqref{eqn:doubleint}.  When $b$ decreases, the effective control authority decreases.  In Fig.~\ref{fig:exact_results}c we compute the value function for $b=1$ (cyan).  We then compute the value function for $b = .8$ (red).  Finally, we warm-start from the original cyan value function and reach the new red value function exactly, as shown in black. We similarly show the reverse case for an increasing control authority in Fig.~\ref{fig:exact_results}d. See Remark~\ref{rem:2} for details.

\begin{table}[]\centering
\begin{tabular}{llll}
\multicolumn{4}{c}{Table 1: Runtime Analysis for Reachability Methods}                                                                                                                                                                                                                                                                                                                      \\ \hline
\multicolumn{1}{|l|}{}                                                                                 & \multicolumn{1}{c|}{Standard}                                                    & \multicolumn{1}{c|}{Warm-Start}                                                  & \multicolumn{1}{c|}{Discounted}                                                                              \\ \hline
\multicolumn{1}{|l|}{a) Increasing $\targetset$ (exact)}                                                          & \multicolumn{1}{l|}{\begin{tabular}[c]{@{}l@{}}6.4s, \\ 115 steps\end{tabular}}  & \multicolumn{1}{l|}{\begin{tabular}[c]{@{}l@{}}6.0s, \\ 109 steps\end{tabular}}  & \multicolumn{1}{l|}{\begin{tabular}[c]{@{}l@{}}12.5s, \\ 231 steps\end{tabular}}                             \\ \hline
\multicolumn{1}{|l|}{b) Decreasing $\targetset$ (conserv)}                                                        & \multicolumn{1}{l|}{\begin{tabular}[c]{@{}l@{}}6.0s \\ 110 steps\end{tabular}}   & \multicolumn{1}{l|}{\begin{tabular}[c]{@{}l@{}}9.3s, \\ 169 steps\end{tabular}}  & \multicolumn{1}{l|}{\begin{tabular}[c]{@{}l@{}}21.2s, \\ 385 steps\end{tabular}}                             \\ \hline
\multicolumn{1}{|l|}{d) Decreasing $\controlspace$ (exact)}                                                          & \multicolumn{1}{l|}{\begin{tabular}[c]{@{}l@{}}6.5s, \\ 124 steps\end{tabular}}  & \multicolumn{1}{l|}{\begin{tabular}[c]{@{}l@{}}6.0s, \\ 111 steps\end{tabular}}  & \multicolumn{1}{l|}{\begin{tabular}[c]{@{}l@{}}20.3s, \\ 374 steps\end{tabular}}                             \\ \hline
\multicolumn{1}{|l|}{e) Increasing $\controlspace$ (conserv)}                                                        & \multicolumn{1}{l|}{\begin{tabular}[c]{@{}l@{}}6.8s, \\ 124 steps\end{tabular}}  & \multicolumn{1}{l|}{\begin{tabular}[c]{@{}l@{}}6.1s, \\ 111 steps\end{tabular}}  & \multicolumn{1}{l|}{\begin{tabular}[c]{@{}l@{}}20.5s, \\ 374 steps\end{tabular}}                             \\ \hline
\multicolumn{1}{|l|}{c) Increasing $\disturbancespace$ (exact)}                                                          & \multicolumn{1}{l|}{\begin{tabular}[c]{@{}l@{}}21.4s, \\ 311 steps\end{tabular}} & \multicolumn{1}{l|}{\begin{tabular}[c]{@{}l@{}}7.7s, \\ 112 steps\end{tabular}}  & \multicolumn{1}{l|}{\begin{tabular}[c]{@{}l@{}}13.3s, \\ 195 steps\end{tabular}}                             \\ \hline
\multicolumn{1}{|l|}{d) Decreasing $\disturbancespace$ (conserv)}                                                        & \multicolumn{1}{l|}{\begin{tabular}[c]{@{}l@{}}6.0s, \\ 110 steps\end{tabular}}  & \multicolumn{1}{l|}{\begin{tabular}[c]{@{}l@{}}11.7s, \\ 213 steps\end{tabular}} & \multicolumn{1}{l|}{\begin{tabular}[c]{@{}l@{}}19.0s, \\ 346 steps\end{tabular}}                             \\ \hline
\multicolumn{1}{|l|}{\begin{tabular}[c]{@{}l@{}}e) 10D quad\\ increasing m, $\disturbancespace$ (exact)\end{tabular}}    & \multicolumn{1}{l|}{\begin{tabular}[c]{@{}l@{}}3.7hr, \\ 86 steps\end{tabular}}  & \multicolumn{1}{l|}{\begin{tabular}[c]{@{}l@{}}2.8hr, \\ 65 steps\end{tabular}}  & \multicolumn{1}{l|}{\begin{tabular}[c]{@{}l@{}}\textgreater{}18 hr, \\ \textgreater{}401 steps\end{tabular}} \\ \hline
\multicolumn{1}{|l|}{\begin{tabular}[c]{@{}l@{}}f) 10D quad \\ decreasing m, $\disturbancespace$ (conserv)\end{tabular}} & \multicolumn{1}{l|}{\begin{tabular}[c]{@{}l@{}}.68hr,\\ 50 steps\end{tabular}}   & \multicolumn{1}{l|}{\begin{tabular}[c]{@{}l@{}}.67hr,\\ 48 steps\end{tabular}}   & \multicolumn{1}{l|}{\begin{tabular}[c]{@{}l@{}}1.12hr,\\ 82 steps\end{tabular}}                              \\ \hline
\end{tabular}\vspace{-2em}
\end{table}

\subsection{Changing Disturbance Authority}
Following similar logic to the previous example, we find that increasing $\disturbancespace$ to a larger $\disturbancespace'$ has the same effect on the value function as decreasing $\controlspace$ to $\controlspace'$. To demonstrate this, we change the disturbance bounds in our model \eqref{eqn:doubleint}.  In Fig.~\ref{fig:exact_results}e we compute the value function for $d\in[0,0]$, shown in blue.  We then compute the value function for $d \in [-4,4]$.  Finally, we warm-start from the original cyan value function and reach the new red value function exactly, as shown in black. We similarly show the reverse case for a decreasing disturbance authority in Fig.~\ref{fig:exact_results}f.

\section{high-dimensional example}
The strength of warm-starting in reducing computation time is best seen in high-dimensional examples.  In this example we perform reachability analysis to provide safety guarantees for a 10D nonlinear near-hover quadcopter model from \cite{Bouffard12,Chen2016DecouplingJournal}. When the quadcopter experiences changes to its environment constraints or system dynamics (e.g. changes in mass or disturbances), it must update its safety guarantees appropriately.  

The 10D near-hover quadcopter dynamics has states $( p_x,  p_y,  p_z)$ denoting the position, $( v_x,  v_y,  v_z)$ for velocity, $(\theta_x, \theta_y)$ for pitch and roll, and $(\omega_x, \omega_y)$ for pitch and roll rates. Its controls are $(S_x, S_y)$, which respectively represent the desired pitch and roll angle, and $T_z$, which represents the vertical thrust. The disturbances are $(\dstb_x, \dstb_y, \dstb_z)$ which represents wind, and $g$ is gravity. Its model is:\vspace{-.5em}

\begin{equation}
\label{eq:Quad10D_dyn}
\begin{aligned}
\begin{array}{c}
	\left[
	\begin{array}{c}
	\dot p_x\\
	\dot v_x\\
	\dot\theta_x\\
	\dot\omega_x\\
	\dot p_y\\
	\dot v_y\\
	\dot\theta_y\\
	\dot\omega_y\\
	\dot p_z\\
	\dot v_z
	\end{array}
	\right]
	=
	\left[
	\begin{array}{c}
	 v_x + \dstb_x\\
	g \tan \theta_x\\
	-d_1 \theta_x + \omega_x\\
	-d_0 \theta_x + n_0 S_x\\
	 v_y + \dstb_y\\
	g \tan \theta_y\\
	-d_1 \theta_y + \omega_y\\
	-d_0 \theta_y + n_0 S_y\\
	 v_z + \dstb_z \\
	(k_T/m) T_z - g
	\end{array}
	\right].
\end{array}\\
\end{aligned}
\end{equation}

\noindent The parameters $d_0, d_1, n_0, k_T$, as well as the control bounds $\cset$ that we used were $d_0 = 10, d_1 = 8, n_0 = 10, k_T = 4.55, |\ctrl_x|, |\ctrl_y| \le 10 \text{ degrees}, 0 \le \ctrl_z \le 2g$.  As in \cite{Chen2016DecouplingJournal}, we can decompose this into two 4D systems and one 2D system.

In this example the initial mass is $m = 5$ and initial disturbances are $|\dstb_x|, |\dstb_y| \le 1, |\dstb_z| \le 1$.  As the quadcopter is flying, the mass increases to $m=5.25$ (say, due to rain accumulation or picking up a package), effectively decreasing the control bounds. In addition, disturbance bounds go up: $|\dstb_x|, |\dstb_y| \le 1.5$. In this scenario we can warm-start from the previously computed value function to update the safety guarantees exactly. The value function converges to the true solution (max error of 0.189 in the $p_x, p_y$ subsystems and 0.003 in the $p_z$ subsystem) in 66 steps (2.8 hours) instead of 87 steps (3.65 hours) for standard reachability. Discounted reachability still hadn't converged after 400 steps (18+ hours), with an error of 0.0034 in $p_x, p_y$ and .325 in the $p_z$ subsystem. 

If the mass and disturbances instead go down (say, to $m = 4.8, |\dstb_x|, |\dstb_y| \le .95$), we can guarantee that the warm-start solution will at best be exactly the new solution, and at worst will be a conservative solution.
As demonstrated in Sec.~\ref{sec:examples_conserv}, in practice we almost always converge to the correct solution, and this 10D example converges correctly as well (max error of $2.7e-05$ in the $p_x, p_y$ subsystems and .074 in the $p_z$ subsystem). Our warm-starting method took 48 steps, compared to 50 for standard reachability and 82 for discounting (with errors of $8.6e-06$ in the the $p_x, p_y$ subsystems and .004 in the $p_z$ subsystem).
Though warm-starting does not provide much computational benefit in this case, every iteration toward convergence provides a guaranteed safe over-approximation of the BRT, which is not true for standard reachability.\vspace{-.5em} 

\section{Discussion \& Conclusion}
Warm-starting infinite-horizon HJI reachability computations with intelligent initializations is beneficial because it may lead to a sizable reduction in computation time by reducing the number of iterations required for convergence.  In this paper we proved that warm-starting  
will provide guaranteed conservative safety analyses and controllers.  Moreover, when the initialization is under-conservative (i.e. $\warmfunc(\state) \geq \vfuncl^*(\state)$), we proved that the reachability analysis is guaranteed to converge to the true solution. 
We also showed several conditions for which exact convergence is achieved, and several cases that will either move closer to the correct safety guarantees or remain conservative with every iteration.
In practice we frequently converge to the correct solution regardless of the conservativeness of the initialization.

 We demonstrated these results through several examples, including a 10D quadcopter model experiencing changes in mass and disturbance bounds. We were able to accurately recover the updated value function representing the backwards reachable tube in fewer iterations than standard or discounted reachability. For our examples we find that warm-start reachability is $1.6$ times faster than standard reachablity and $6.2$ times faster than (untuned) discounting.

This new formulation opens the door to many different methods for solving HJI reachability problems efficiently. One direction would be to numerically parameterize the value function (for example, by different masses), then warm-start online using an interpolated intialization based on updated problem information.  Another exciting direction is to update the value function locally for local changes in the environment, or to use sparse or adaptive gridding of the state space for fast initializations.  Finally, we could use the conclusions drawn from this paper to inform a more tractable formulation of discounted reachability. \vspace{0em}


\section{Appendix}

\subsection{Changing the Target Set}
\label{subsec:targ}
For a target function, $\targetfunc'(\state)$, we define $\costfunctional_{\targetfunc'}$ and $\vfunc_{\targetfunc'}(\state, \tvar)$ as~\eqref{eq:costfunctional} and~\eqref{eq:vl} by replacing $\targetfunc(\state)$ by $\targetfunc'(\state)$. 
We define $\vfunck(\state, \tvar)$ as the value function for $\targetset'$ when we warm-start from $\vconv_{\targetfunc}(\state)$.

%
\begin{proposition}
\label{prop:targetset}
    If  $\targetset \subseteq \targetset'$ (and hence $\targetfunc(\state) \leq \targetfunc'(\state)$) and we warm-start the value function computation for $\targetset'$ with $\warmfunc(\state) = \vconv_{\targetfunc}(\state)$, i.e., $\vfunc_{\warmfunc}(\state, 0) = \vconv_{\targetfunc}(\state)$, then 
    \begin{equation}
        \lim_{\tvar \rightarrow -\infty} \vfunc_{\warmfunc}(\state, \tvar) = \vconv_{\targetfunc'}(\state)
    \end{equation}
\end{proposition}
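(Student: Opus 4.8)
The plan is to deduce Proposition~\ref{prop:targetset} from Theorem~\ref{theorem:equality} by verifying, for the warm-start $\warmfunc = \vconv_{\targetfunc}$ applied to the enlarged target $\targetset'$, the single hypothesis that $\warmfunc$ dominates the true converged solution of the new problem. Reading Theorem~\ref{theorem:equality} with the target relabelled from $\targetfunc$ to $\targetfunc'$ (its proof, routed through Theorem~\ref{thm:allk} and Lemma~\ref{lem:kgreaterV}, is stated for a generic target and so carries over verbatim), it asserts: if $\warmfunc(\state) \geq \vconv_{\targetfunc'}(\state)$ for all $\state$, then $\vconv_{\warmfunc}(\state) = \vconv_{\targetfunc'}(\state)$, which is precisely $\lim_{\tvar \to -\infty}\vfunc_{\warmfunc}(\state,\tvar) = \vconv_{\targetfunc'}(\state)$. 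Hence the whole proposition reduces to establishing the one inequality $\vconv_{\targetfunc}(\state) \geq \vconv_{\targetfunc'}(\state)$ for every $\state$.

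The operative hypothesis here is the set inclusion $\targetset \subseteq \targetset'$, and I would first translate it into a pointwise ordering of the target functions. Because $\targetfunc$ and $\targetfunc'$ are signed-distance functions for $\targetset$ and $\targetset'$, enlarging the unsafe set cannot increase the signed distance at any state: exterior points only lose positive margin, and points in $\targetset'\setminus\targetset$ switch from positive to negative. Thus $\targetfunc(\state) \geq \targetfunc'(\state)$ for all $\state$. This is the ordering consistent with the main text's observation that $\warmfunc = \vconv_{\targetfunc} \geq \vconv_{\targetfunc'}$ when the target grows, so the parenthetical inequality in the proposition is to be read in this direction.

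Next I would propagate this ordering through the value functions defined in~\eqref{eq:vl}. For a fixed control signal $\cfunc$ and disturbance strategy the two value functions are built from the \emph{same} trajectory $\trajstandard$, so the pointwise bound $\targetfunc \geq \targetfunc'$ gives $\inf_{\tdummy\in[\tvar,0)}\targetfunc(\trajstandard(\tdummy)) \geq \inf_{\tdummy\in[\tvar,0)}\targetfunc'(\trajstandard(\tdummy))$ and likewise at $\tdummy = 0$; applying the inner $\min_{\dfunc}$ and the outer $\max_{\cfunc}$, both monotone operations, preserves the inequality and yields $\vfunc_{\targetfunc}(\state,\tvar) \geq \vfunc_{\targetfunc'}(\state,\tvar)$ for every $\tvar < 0$. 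Letting $\tvar \to -\infty$ preserves $\geq$, so $\vconv_{\targetfunc}(\state) \geq \vconv_{\targetfunc'}(\state)$; since $\warmfunc = \vconv_{\targetfunc}$, the hypothesis of Theorem~\ref{theorem:equality} is met and the claim follows.

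I expect the monotonicity-propagation step to be the main obstacle, since I must argue that the pointwise target ordering survives both the $\max_{\cfunc}\min_{\dfunc}$ game value and the infinite-horizon limit. The game-value part is clean because, for a fixed nonanticipative strategy pair, both payoffs are evaluated along the identical trajectory, so the smaller target can only lower the payoff; the limit part requires only that $\vfunc(\state,\tvar)$ be monotone in $\tvar$ and bounded, so that $\vconv$ exists and respects $\leq$, a property already relied on when the converged value functions are defined. The remaining point to check is the legitimacy of instantiating Theorem~\ref{theorem:equality} with $\targetfunc'$, which is immediate because none of its supporting results exploit special structure of the target function.
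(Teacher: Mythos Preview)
Your proposal is correct and follows essentially the same route as the paper: show $\vconv_{\targetfunc'}(\state)\le\vconv_{\targetfunc}(\state)$ by propagating the pointwise bound $\targetfunc'\le\targetfunc$ through the $\max_{\cfunc}\min_{\dfunc}\inf_{\tdummy}$ definition of the value function, pass to the limit $\tvar\to-\infty$, and then invoke Theorem~\ref{theorem:equality} with $\warmfunc=\vconv_{\targetfunc}$ for the new target $\targetfunc'$. You also correctly note (and resolve) that the parenthetical inequality in the proposition statement should read $\targetfunc(\state)\ge\targetfunc'(\state)$, consistent with how the paper's own proof uses it.
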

\begin{proof}
To prove this, it suffices to prove that 
$
    \forall \state \; \vconv_{\targetfunc'}(\state) \leq \vconv_{\targetfunc}
$. We have $\forall \state, \tvar < 0$, \vspace{-.5em}
\begin{equation}
    \begin{split}
        \vfunc_{\targetfunc'}(\state, \tvar) &= \max_{\cfunc}\min_{\dfunc}
     \inf_{\tdummy \in [\tvar,0]} \targetfunc'(\trajstandard(\tdummy)) \\
     & \leq \max_{\cfunc}\min_{\dfunc}
     \inf_{\tdummy \in [\tvar,0]} \targetfunc(\trajstandard(\tdummy)) 
     = \vfunc_{\targetfunc}(\state, \tvar)
    \end{split}
\end{equation}
As $\tvar \rightarrow -\infty$, $\vconv_{\targetfunc'}(\state) \leq \vconv_{\targetfunc}(\state)$. From Theorem~\ref{theorem:equality}, if we warm-start, $\vfunc_{\targetfunc'}(\state, 0) =\warmfunc(\state)$ where $\warmfunc(\state) \geq \vconv_{\targetfunc'}(\state)$, then $\vconv_{\warmfunc}(\state) = \vconv_{\targetfunc}$. Since,  $\warmfunc(\state) = \vconv_{\targetfunc}(\state) \geq \vconv_{\targetfunc'}(\state)$, Theorem~\ref{theorem:equality} holds. \vspace{0em}
\end{proof}
\begin{remark} \label{rem:1}
By reversing the proof with conditions $\targetset \supseteq \targetset'$, then $\warmfunc(\state) \leq \vfunck(\state,\tvar) \leq \vfuncl^*(\state) \; \forall \state,\tvar$.
\end{remark}\vspace{-.4em}

\subsection{Changing the Control Authority}
\label{subsec:cont}
For a control domain, $\controlspace'$, we define  $\vfunc_{\targetfunc'}(\state, \tvar)$ similar to~\eqref{eq:vl} by replacing $\ctrl \in \controlspace$ by $\ctrl \in \controlspace'$. 
We define $\vfunck(\state, \tvar)$ as the value function for $\controlspace'$ when we warm-start from $\vconv_{\targetfunc}(\state)$.


\begin{proposition}
\label{prop:controlbounds}
    If the effective control authority $\controlspace'$ reduces, i.e., $\controlspace' \subseteq \controlspace$ and if $\vfunc_{\warmfunc}(\state, 0) = \warmfunc(\state) =  \vconv_{\targetfunc}(\state)$, then
    \begin{equation}
    \lim_{\tvar  \rightarrow -\infty} \vfunc_{\warmfunc}(\state, 0) = \vconv_{\targetfunc'}(\state)
\end{equation}
\end{proposition}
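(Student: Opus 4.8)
The plan is to reduce Proposition~\ref{prop:controlbounds} to Theorem~\ref{theorem:equality}, mirroring exactly the argument used for the changing target set in Proposition~\ref{prop:targetset}. The only new ingredient I need to establish is a monotonicity of the converged value function with respect to shrinking the control authority, namely $\vconv_{\targetfunc'}(\state) \leq \vconv_{\targetfunc}(\state)$ for all $\state$, where the primed value function is the standard (cold-start) solution computed with the reduced control domain $\controlspace' \subseteq \controlspace$. Once this inequality is in hand, the hypothesis $\warmfunc(\state) = \vconv_{\targetfunc}(\state)$ immediately gives $\warmfunc(\state) \geq \vconv_{\targetfunc'}(\state)$, which is precisely the condition $\warmfunc \geq \vconv$ required by Theorem~\ref{theorem:equality} applied to the $\controlspace'$-problem, and the conclusion $\lim_{\tvar\to-\infty}\vfunc_{\warmfunc}(\state,\tvar) = \vconv_{\targetfunc'}(\state)$ follows.

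To prove the monotonicity, I would first note that shrinking the control set changes neither the dynamics $\dyn$, the disturbance set $\dset$, the target function $\targetfunc$, nor the cost functional $\costfunctional_{\targetfunc}$; it only restricts the family of admissible control signals from $\cfset$ (measurable maps into $\controlspace$) to $\cfset' \subseteq \cfset$ (measurable maps into $\controlspace'$). Since $\controlspace'\subseteq\controlspace$, every admissible control for the primed problem is also admissible for the unprimed one. Writing the value function in the $\max_{\cfunc}\min_{\dfunc}$ form of \eqref{eq:vl}, the inner functional $\min_{\dfunc}\inf_{\tdummy\in[\tvar,0]}\targetfunc(\trajstandard(\tdummy))$ is identical in both problems for any fixed control signal, so the only difference is that the primed problem takes the outer maximum over a smaller feasible set. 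Maximizing a fixed functional over a subset can only decrease its value, giving $\vfunc_{\targetfunc'}(\state,\tvar) \leq \vfunc_{\targetfunc}(\state,\tvar)$ for every $\state$ and every $\tvar<0$, in direct analogy with the chain of inequalities in the proof of Proposition~\ref{prop:targetset}. Taking $\tvar\to-\infty$ yields $\vconv_{\targetfunc'}(\state)\leq\vconv_{\targetfunc}(\state)$.

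Finally, I would invoke Theorem~\ref{theorem:equality} for the reachability problem defined with control domain $\controlspace'$: warm-starting that problem with $\vfunc_{\warmfunc}(\state,0)=\warmfunc(\state)=\vconv_{\targetfunc}(\state)\geq\vconv_{\targetfunc'}(\state)$ satisfies the theorem's hypothesis, so the converged warm-start value function equals the converged standard value function for $\controlspace'$, i.e. $\vconv_{\warmfunc}(\state)=\vconv_{\targetfunc'}(\state)$, which is the claim.

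The step I expect to be the main obstacle is justifying the outer-maximum monotonicity rigorously in the presence of the nonanticipative disturbance strategies of \eqref{eq:valuefunc}: one must check that restricting the control-signal set leaves the disturbance's inner optimization intact so that the inequality direction is preserved, rather than inadvertently shrinking the disturbance's options as well. In the $\inf_{\gamma}\sup_{\cfunc}$ formulation this remains clean—restricting the $\sup$ set lowers the inner supremum for each fixed strategy $\gamma$ and hence lowers the infimum over $\gamma$—but it is the one place where care with the order of the operators is genuinely needed.
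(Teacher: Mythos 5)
Your proposal is correct and follows essentially the same route as the paper's proof: establish $\vconv_{\targetfunc'}(\state) \leq \vconv_{\targetfunc}(\state)$ by noting that the $\controlspace'$-problem maximizes the same cost functional over a smaller set of admissible controls, then apply Theorem~\ref{theorem:equality} with $\warmfunc(\state) = \vconv_{\targetfunc}(\state) \geq \vconv_{\targetfunc'}(\state)$. Your closing remark about verifying the inequality in the $\inf_{\gamma}\sup_{\cfunc}$ nonanticipative formulation is a point the paper glosses over (it simply writes $\max_{\ctrl\in\controlspace'}\min_{\dstb\in\disturbancespace}$), and your resolution of it is sound.
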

\begin{proof}
To prove this, we need only prove that, 
$
    \forall \state \; \vconv_{\targetfunc'}(\state) \leq \vconv_{\targetfunc}(\state)
$. 
We have, $\forall \state, \tvar <0$,\vspace{-1em}
\begin{equation}
    \begin{split}
        \vfunc_{\targetfunc'}(\state, \tvar) &= \max_{\ctrl \in \controlspace'} \min_{\dstb \in \disturbancespace} \min \Big\{
     \inf_{\tdummy \in [\tvar,0)} \targetfunc(\trajstandard(\tdummy)),\targetfunc(\trajstandard(0))\Big\} \\
     & \leq \max_{\ctrl \in \controlspace} \min_{\dstb \in \disturbancespace} \min \Big\{
     \inf_{\tdummy \in [\tvar,0)} \targetfunc(\trajstandard(\tdummy)),\targetfunc(\trajstandard(0))\Big\}\\
     &= \vfunc_{\targetfunc}(\state, \tvar)
    \end{split}
\end{equation}
As $\tvar \rightarrow -\infty$, $\vconv_{\targetfunc'}(\state) \leq \vconv_{\targetfunc}(\state)$. 
From Theorem~\ref{theorem:equality}, if we warm-start with $\vfunc_{\warmfunc}(\state, 0) = \warmfunc(\state)$ with $\warmfunc(\state) \geq \vconv_{\targetfunc}(\state)$, then $\vconv_{\warmfunc}(\state) = \vconv_{\targetfunc'}(\state)$. Hence, $\warmfunc(\state) = \vconv_{(\targetfunc}(\state)$ satisfies Theorem~\ref{theorem:equality}.
\end{proof}
\begin{remark} \label{rem:2}
By reversing the proof with conditions $\controlspace \subseteq \controlspace'$, then $\warmfunc(\state) \leq \vfunck(\state,\tvar) \leq \vfuncl^*(\state) \; \forall \state,\tvar$.
\end{remark}\vspace{0em}

\bibliographystyle{unsrtnat}
\bibliography{references}
\end{document}